\def\ps@pprintTitle{%
 \let\@oddhead\@empty
 \let\@evenhead\@empty
 \def\@oddfoot{\centerline{\thepage}}%
 \let\@evenfoot\@oddfoot}
\newtheorem{theorem}{Theorem}
\newtheorem{lemma}{Lemma}
\DeclareMathOperator{\sech}{sech}
\newcommand{\ie}{{\it i.e.}}
\newcommand{\p}{\partial}
\newcommand{\ex}{\bm{\hat{e}}_1}
\newcommand{\ey}{\bm{\hat{e}}_2}
\newcommand{\ez}{\bm{\hat{e}}_3}
\newcommand{\ev}{{\bf\hat{e}_\varphi}}
\renewcommand{\eth}{{\bf\hat{e}_\theta}}
\newcommand{\thth}{{\bf \hat{e}_{\theta,\theta}}}
\newcommand{\thv}{{\bf \hat{e}_{\theta,\varphi}}}
\newcommand{\vth}{{\bf \hat{e}_{\varphi,\theta}}}
\newcommand{\vv}{{\bf \hat{e}_{\varphi,\varphi}}}
\newcommand{\DM}{D}
\newcommand{\dm}{\lambda}
\newcommand{\Anisotropy}{K}
\newcommand{\anisotropy}{k}
\newcommand{\magn}{\bm{m}}
\newcommand{\heff}{\bm{f}}
\newcommand{\lex}{\ell_{\rm ex}}
\newcommand{\ldm}{\ell_{\rm D}}
\newcommand{\ldw}{\ell_{\rm w}}
\newcommand{\Energy}{E}
\newcommand{\linmom}{P}
\newcommand{\pold}{p}
\newcommand{\tmagn}{\mathcal{M}}
\newcommand{\vel}{c}
\newcommand{\velmax}{\vel_\ell}
\newcommand{\vabs}{v}
\newcommand{\loverk}{\beta}
\newcommand{\lifshitz}{\mathcal{L}}
\begin{document}

\begin{frontmatter}


\title{Traveling domain walls in chiral ferromagnets}


\author{Stavros Komineas}
\address{Department of Mathematics and Applied Mathematics, University of Crete, 71305 Heraklion, Greece}
\author{Christof Melcher}
\address{Department of Mathematics I \& JARA Fundamentals of Future Information Technology, RWTH Aachen University, 52056 Aachen, Germany}
\author{Stephanos Venakides}
\address{Department of Mathematics, Duke University, Durham, NC, USA}

\begin{abstract}
We show that chiral symmetry breaking enables traveling domain wall solution for the conservative Landau-Lifshitz equation of a uniaxial ferromagnet with Dzyaloshinskii-Moriya interaction. In contrast to related domain wall models including stray-field based anisotropy, traveling wave solutions are not found in closed form. For the construction we follow a topological approach and provide details of solutions by means of numerical calculations.
\end{abstract}

\begin{keyword}
Micromagnetics \sep Dzyaloshinskii-Moriya interaction \sep chiral symmetry breaking \sep domain walls \sep traveling waves

\end{keyword}

\end{frontmatter}

\section{Introduction}
\label{sec:introduction}

Magnetic domain walls (DW) are transition layers separating domains of different magnetizations in magnetic materials.
They are fundamental for understanding domain structure.
In mathematical idealization, they are a special form of kinks, connecting different asymptotic equilibrium states on the unit sphere.
Static domain walls occur as stable equilibria of the micromagnetic energy, that depends on the material crystal structure and sample geometry.
The simplest example arises from the combination of exchange and uniaxial anisotropy.
These interactions are symmetric with respect to rotations around the anisotropy axis, giving rise to a one-parameter group of degenerate static domain wall configurations in the form of geodesic connections (meridians) of the two antipodes of the magnetization (poles) $m_3=\pm 1$.

Antisymmetric exchange, also called the Dzyaloshinskii-Moriya interaction (DMI) \cite{Dzyaloshinskii_JETP1957,Moriya_PhysRev1960}, is present in a class of ferromagnetic materials whose crystal structure lacks inversion symmetry.
The DMI has profound consequences for the equilibrium domain structure \cite{Dzyaloshinskii_JETP1964,BogdanovHubert_JMMM1994}.
It breaks the symmetry of the interaction energy and the degeneracy of the solution space.
In a DM material, if anisotropy is strong enough, the fully aligned ferromagnetic state is the ground state and a domain wall is an excitation.
In the presence of DMI two specific meridians are selected, on which domain walls, stable and unstable, are maintained.
The walls are of Bloch type, i.e., the magnetization vector is at the angle $\varphi= \pm \pi/2$ to the direction of propagation.

The dynamics of domain walls is particularly interesting because a moving domain wall corresponds to a varying domain structure.
Also, a domain wall can play the role of a unit of information that is transmitted via wall propagation.
In general, domain wall dynamics is governed by the (conservative) Landau-Lifshitz (LL) \cite{LandauLifshitz_PZS1935} or the (dissipative) Landau-Lifshitz-Gilbert (LLG) equation \cite{ODell}, derived from an energy functional.
In a magnet with symmetric Heisenberg exchange and uniaxial anisotropy alone, traveling domain wall solutions for the corresponding LL equation are not possible.
A traveling domain wall solution of the LL equation, called the Walker solution \cite{SchryerWalker_JAP1974,ODell}, is obtained in a model with an extra anisotropy, stemming from magnetostatic stray-field interaction, that is breaking the symmetry around the original uniaxial anisotropy axis. 
These solutions are typically discussed in a model with damping and external field, but they are actually exact solutions of the conservative model (see \ref{sec:WalkerWall}).
In axisymmetric systems, DW solutions moving with constant velocity but with precessional oscillations in time have been discussed e.g. in \cite{GoussevRobbinsSlastikov2010, MelcherRademacher2017}.

A common mathematical feature of the models which have been shown to support traveling domain walls (without temporal oscillations) is a breaking of rotational symmetry through stray fields, see e.g. \cite{MelcherDWmotion} and \cite{CapellaMelcherOtto} for the extreme case of a traveling N\'eel wall.
We notice that this effect can also be achieved by chiral symmetry breaking.
Therefore, we are motivated to explore the possibility of traveling domain walls in the LL equation in the presence of DMI.

The possibility for freely (unforced) traveling domain walls in chiral magnets is indicated by numerical and analytical studies for the wall mobility, as a response to an applied magnetic field.
It has been shown that this is enhanced by the DMI compared to the standard Walker domain wall \cite{ThiavilleRohart_EPL2012}.
The increased mobility for chiral DW is also manifested in the case of motion due to an applied electrical current \cite{TretiakovAbanov_PRL2010,EmoriBauerBeach_nmat2013,Brataas_nnano2013,GoussevRobbinsSlastikovTretiakov}.
The dynamics of the DW has been discussed largely within collective coordinate approaches, called the $q-\Phi$ model, and numerical simulations \cite{ThiavilleRohart_EPL2012,Wieser_PSS2015}.

We study the effect of chiral symmetry breaking on the laws of dynamics for magnetic domain walls.
The dynamics of the magnetization can be linked to the symmetries of the magnetic interactions via associated conservation laws.
We show that the symmetry-breaking introduced by the DMI in a film with perpendicular anisotropy allows for propagating DW solutions within the Landau-Lifshitz equation, even if the magnetostatic field is negligible.

At the stable meridian ($\varphi=\pi/2$) the wall is static and deviation from this at the symmetry point (center of wall) gives rise to a tilting angle $\Delta \varphi$.
Tilting, however, induces a dynamic response.
Here, we prove the existence of propagating domain wall solutions in the form of traveling waves in the conservative model by employing a topological argument.

\begin{theorem}
\label{thm:summary}
Let the easy-axis anisotropy parameter $\anisotropy^2>0$ and the DMI parameter $\dm/\anisotropy > 0$ be sufficiently small.
Then, for sufficiently small tilting angle, there exists a traveling domain wall.
\end{theorem}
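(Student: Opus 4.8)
The plan is to recast the problem as the search for a heteroclinic orbit of an ODE and to produce it by a topological shooting argument, since---unlike the Walker wall of \ref{sec:WalkerWall}---no closed form is available. Inserting the traveling ansatz $\magn = \magn(\xi)$, $\xi = x - \vel t$, into the conservative Landau--Lifshitz equation $\p_t \magn = -\magn\times\heff$ turns it into an autonomous second-order system in $\xi$ carrying the velocity $\vel$ as a parameter. In spherical coordinates $\magn = (\sin\theta\cos\varphi,\,\sin\theta\sin\varphi,\,\cos\theta)$ the aligned states $m_3 = \pm 1$ (i.e.\ $\theta = 0,\pi$) are equilibria of this flow, and a domain wall is exactly a trajectory connecting them. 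For $\vel = 0$ this connection is the static Bloch wall on the selected meridian $\varphi = \pi/2$; the traveling wall is to be obtained by switching on a small tilting $\Delta\varphi$ at the wall center and the velocity it induces.

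First I would use the conservative structure. Because the Landau--Lifshitz flow is generated by the energy $\Energy$ and spatial translations are generated by the linear momentum $\linmom$, traveling waves are critical points of $\Energy - \vel\linmom$, so the reduced $\xi$-dynamics is Hamiltonian and possesses a first integral (the traveling-wave energy). The connecting orbit must therefore lie on the single level set of this integral that passes through both poles, and the key structural input is that the $\theta\mapsto\pi-\theta$ reflection symmetry, combined with an appropriate reflection of $\varphi$ and the reversal $\xi\mapsto-\xi$, makes the two poles equivalent, so that a \emph{symmetric} heteroclinic can in principle close up.

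Next I would set up the shooting. I would take the orbit leaving the north pole $\theta = 0$ along its unstable manifold and integrate it forward in $\xi$, defining a shooting function that measures the deviation of the trajectory from the fixed-point set of the reversor as it crosses the symmetry section $\theta = \pi/2$---for instance the residual value of $\varphi$, or of $\theta'$, at that crossing. At $\Delta\varphi = 0$, $\vel = 0$ the exact Bloch wall makes this function vanish; I would then show that, for small fixed tilting $\Delta\varphi\neq 0$, the shooting function changes sign as $\vel$ ranges over a small interval around zero, so that an intermediate-value (equivalently, topological degree) argument supplies a velocity $\vel(\Delta\varphi)$ for which the trajectory actually reaches the south pole. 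To make the estimates uniform I would rescale $\xi$ by the wall width $\sim 1/\anisotropy$, controlling the limits $\anisotropy^2\to 0$ and $\dm/\anisotropy\to 0$ so that the small DMI enters as a regular perturbation of the degenerate, rotationally symmetric wall.

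The principal obstacle is that a heteroclinic between two saddles in a conservative (non-dissipative) system is non-generic: there is no contraction to pin the connection down, so its existence must be forced by the reversibility and by the topology of paths on $S^2$ joining the two poles, rather than read off from an implicit function theorem---which in any case degenerates here because the $\dm = 0$ problem has a whole circle of Bloch walls. The delicate work is thus twofold: proving the sign change of the shooting function (which requires sharp control of the decay toward, and of the eigendirections at, the two saddles) together with its monotone dependence on $\Delta\varphi$, and then showing that the entire construction persists uniformly as the two small parameters tend to zero.
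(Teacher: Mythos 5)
Your overall strategy coincides with the paper's: reduce to the traveling-wave ODE, use the first integral coming from the conservative structure (projection on $\magn_\perp$ gives $|\magn'|^2=k^2\sin^2\theta$, lowering the order and yielding the reduced system for $(\rho,z,\varphi)$), exploit the reversibility $\xi\mapsto-\xi$, $(\rho,z,\varphi)\mapsto(-\rho,-z,\varphi)$, and run an intermediate-value/topological shooting in the velocity $\vel$ at fixed small tilting --- precisely because the $\dm=0$ problem has a circle of Bloch walls and the implicit function theorem degenerates, as you correctly note. However, two of your concrete choices would fail as stated. First, your candidate shooting functions are wrong: the fixed-point set of the reversor is $\{z=0,\ \rho=0\}$, i.e. $\{\theta=\pi/2,\ \varphi'=0\}$ with $\varphi$ and $\theta'$ unconstrained --- $\varphi$ there is the tilting itself, and in the reduced system $\theta'=-k\sin\theta\cos\rho$ is \emph{even} in $\rho$, so neither the residual of $\varphi$ nor of $\theta'$ at the crossing of $\theta=\pi/2$ can detect the symmetric connection or change sign; the correct residual is $\varphi'$ (equivalently $\sin\rho$). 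Second, shooting from ``the unstable manifold of the pole'' is problematic here: after the cylinder change of variables the poles sit at $z=\pm\infty$, and the approach is spiraling rather than along saddle eigendirections ($\varphi'\to-\vel/2\neq0$ by Theorem~\ref{thm:asymptotics}), so parametrizing that manifold already presupposes the asymptotic analysis. The paper sidesteps both issues by shooting \emph{forward from the symmetry section itself}, with explicit finite data $\rho(0)=0$, $z(0)=0$, $\varphi(0)=\varphi_0$; reversibility then extends the orbit to $\xi<0$ automatically, and no invariant-manifold construction is needed.

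The more substantive gap is that the sign-change claim --- the heart of the proof --- is deferred in your proposal as ``delicate work'' with no mechanism supplied. The paper's actual mechanism is a confinement trichotomy on the one-sided interval $\vel\in[0,2\dm]$ (not a small interval around zero; recall the a priori ordering $0<\vel/2<\dm\le 2k/\pi$): the set A of velocities whose orbit first exits through $\rho=\pi/2$, the set B exiting through $\rho=-\pi/2$, and the complement C, whose nonemptiness gives a forever-confined, hence monotone, orbit converging to the pole. Openness of A and B is not automatic and is exactly where the hypothesis ``$\dm/\anisotropy$ sufficiently small'' enters: Lemma~\ref{thm:technical} shows (under $f(\loverk)<1$) that orbits can only reach $\rho=\pm\pi/2$ transversely, so first-exit behavior is stable under perturbation of $\vel$. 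Nonemptiness at $\vel=2\dm$ is easy ($\rho'>0$ forces exit through $\rho=\pi/2$), but the endpoint $\vel=0$ is the genuinely hard step and is where ``sufficiently small tilting'' enters: for small $\pold_0=\cos\varphi_0$ one linearizes, substitutes $q=\nu\pold$, and runs a Riccati comparison trapping $\nu$ between the root curves $\nu_\pm(z)$ to force $q=-\sin\rho\to1$, i.e. exit through $\rho=-\pi/2$. Any completed version of your argument would need analogues of both ingredients (transversal exit for continuity of the shooting data, and quantitative control at the $\vel=0$ endpoint); without them the intermediate-value argument has nothing to bite on. Your structural remarks --- the Hamiltonian/first-integral reduction, the non-genericity of conservative heteroclinics, the rescaling showing only $\dm/\anisotropy$ and $\vel/\anisotropy$ matter --- are all sound and consistent with the paper.
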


The existence and stability of traveling domain walls in the presence of damping and an external driving magnetic field can also be proven by means of a perturbative approach \cite{KomineasMelcherVenakides_preparation2018}.

Details of the domain wall profiles come as a product of the proof of their existence.
The propagating DMI walls have a profile that is substantially more complicated than the Walker DWs.
For the case of bulk DMI a Bloch wall is static but a N\'eel wall is propagating with maximum velocity in stark contrast to the Walker propagating wall.
We calculate numerically the profiles of propagating domain walls for velocities $0 < \vel < \velmax$ where $\velmax$ is the maximum achieved velocity.

\section{The Landau-Lifshitz equation}
\label{sec:LL}

We consider a ferromagnet described by the magnetization vector $\bm{M}=\bm{M}(x,t)$ that is a function of space and time but it has a constant magnitude $|\bm{M}|=M_s$, where $M_s$ is called the saturation magnetization.
Statics and dynamics of the magnetization are governed by the Landau-Lifshitz equation, whose dimensionless form reads 
\begin{equation}  \label{eq:LL}
 \p_t\magn = -\magn\times\heff
\end{equation}
where $\magn=\bm{M}/M_s$ is the normalized magnetization, in component form $\magn=(m_1,m_2,m_3)$.
The variable $t$ is the dimensionless time that is measured in units of $t_0 = 1/(\gamma_0 \mu_0 M_s)$, where $\gamma_0$ is the gyromagnetic ratio and $\mu_0$ the permeability of vacuum.
The effective field $\bm{f}$ contains the interactions in the material.
We will assume a ferromagnet with exchange, an easy-axis anisotropy, and a DMI.
We will study configurations where the magnetization is varying in only one space direction, that is, we assume $\magn=\magn(x,t)$.
The energy of such a system is
\begin{equation}  \label{eq:energy}
\Energy(\magn)  = \int \left[ \frac{(\p_x\magn)^2}{2} 
+ \frac{\anisotropy^2}{2} (1-m_3^2) 
+ \dm (\magn\times\p_x\magn)\cdot\ex \right]\,dx
\end{equation}
where $\ex$ is the unit vector for the magnetization in the $x$ direction.
We measure distance in units of exchange length $\lex = \sqrt{2A/(\mu_0 M_s^2)}$ where $A$ is the exchange constant.
There are two length scales in the model, $\ldw = \sqrt{A/K}$, where $\Anisotropy$ is the anisotropy constant, and $\ldm = 2A/|D|$, where $\DM$ is the DMI constant.
The dimensionless parameters appearing in the energy \eqref{eq:energy} are $\anisotropy^2 = 2 \Anisotropy/(\mu_0 M_s^2) = (\lex/\ldw)^2$, and $\dm = \lex/\ldm$.
We will consider $\dm > 0$; the case $\dm<0$ corresponds to the transformation $x\to -x$.
The general form of the DM term is given in terms of Lifshitz invariants
\begin{equation} \label{eq:LifshitzInvariants}
\lifshitz_{jk} = (\magn\times\p_j\magn)_k.
\end{equation}
In the energy ~\eqref{eq:energy} we have only kept the Lifshitz invariant $\lifshitz_{11}$ in the $\ex$ direction corresponding to cubic DMI given by $\magn \cdot ( \nabla \times \magn)$. Replacing 
$\lifshitz_{11}$ by $\lifshitz_{12}$ (interfacial DMI)
or a linear combination of both yields a model that is mathematically equivalent modulo a rigid rotation around the $\ez$ axis.

The effective field entering \eqref{eq:LL} is obtained by varying the energy,
\begin{equation}  \label{eq:heff}
 \heff = -\frac{\delta \Energy}{\delta \magn} = \p_x^2\magn + \anisotropy^2\, m_3\ez - 2\dm\, \ex \times \p_x\magn.
\end{equation}

The uniform (ferromagnetic) states $\magn=(0,0\pm 1)$ are the simplest time-independent (static) solutions  of the LL equation \eqref{eq:LL}.
For large anisotropy, such that
\begin{equation}  \label{eq:conditionSpiral}
\anisotropy > \anisotropy_c \equiv \frac{\pi}{2} \dm,
\end{equation}
the ferromagnetic is the ground state of the system,
while for $\anisotropy < \anisotropy_c$ a {\it spiral configuration} becomes the ground state \cite{BogdanovHubert_JMMM1994}.
The period of the spiral increases for increasing anisotropy and goes to infinity as $\anisotropy \to \anisotropy_c$.

In this work we assume a material with $\anisotropy > \anisotropy_c$ and we are looking for domain wall solutions as excitations of the ferromagnetic ground state \cite{RohartThiaville_PRB2013,MuratovSlastikov_PRSA2016}.
A standard Bloch wall
\begin{equation}  \label{eq:BlochWall}
m_1 = 0,\qquad 
m_2 = \pm\sech(\anisotropy x),\qquad 
m_3 = \pm\tanh(\anisotropy x),
\end{equation}
for any combination of the signs,
is a solution of Eq.~\eqref{eq:LL} also in the presence of DMI ($\dm \neq 0$) as the contribution of the DM term on the right-hand-side of Eq.~\eqref{eq:LL} vanishes identically for these configurations.
As the DMI is chiral, the walls with the same signs for $m_2, m_3$ in Eq.~\eqref{eq:BlochWall} have lower energy, for $\dm > 0$.
The walls with opposite signs for $m_2, m_3$ are energy maxima.

One can easily prove that traveling DWs are not possible in model \eqref{eq:LL} when the DMI is not included in the effective field \eqref{eq:heff}.
This is one of the results obtained in \ref{sec:nonchiralMagnet} where a direct and complete solution for the domain walls of the system for $\dm=0$ is given.
For a more intuitive proof let us consider the total magnetization in the direction perpendicular to the film
\begin{equation}  \label{eq:tmagn}
\tmagn = \int_{-\infty}^\infty m_3\, dx
\end{equation}
in the sense of the Cauchy principle value
and calculate its time derivative using Eq.~\eqref{eq:LL}
\begin{equation}  \label{eq:tmagnDerivative}
\frac{d\tmagn}{dt} = -2\dm \int_{-\infty}^\infty m_1\p_x m_3\, dx.
\end{equation}
This result is a reflection of the fact that the exchange and anisotropy interactions are invariant with respect to rotations around the third axis of the magnetization, and therefore the total magnetization $\tmagn$ is conserved in the absence of DMI ($\dm=0$).
Since a propagating DW configuration is equivalent to expanding one domain (say, the ``up'' domain) in favor of the other (``down'' domain), thus changing $\tmagn$,
DW propagation is not possible in a model where the total magnetization $\tmagn$ is conserved.

In the model with effective field \eqref{eq:heff} it is entirely due to the DMI that the symmetry is broken and the associated conservation law is not valid, thus allowing for the possibility of propagating domain walls.
If we assume a rigid wall connecting the south pole ($m_3=-1$) at $x\to-\infty$ to the north pole ($m_3=1$) at $x\to\infty$ and traveling with velocity $\vel$ then we obtain from Eq.~\eqref{eq:tmagn} $\vel=-(1/2)d\tmagn/dt$ and from \eqref{eq:tmagnDerivative}
\begin{equation}  \label{eq:speedFromMagnetization}
\vel = \dm \int_{-\infty}^\infty m_1 \p_x m_3\, dx.
\end{equation}
This gives an upper bound for the speed
\begin{equation}  \label{eq:speedBound1}
|\vel| \leq 2 \dm.
\end{equation}
More generally, a Lifshitz invariant $\lifshitz_{1k}$ gives rise to an integrand $m_k \p_x m_3$ in Eq. \eqref{eq:speedFromMagnetization}. In particular, no non-trivial traveling DW solution is possible in the case $k=3$ corresponding to a wire along $\ez$ with cubic DMI and stray-field induced anisotropy.

Lastly, considering the conditions of Eq.~\eqref{eq:conditionSpiral} and Eq.~\eqref{eq:speedBound1}, we have, for positive $\vel$, the ordering 
\begin{equation}  \label{eq:parameter_ordering}
 0 < \frac{\vel}{2} < \dm \le \frac{2}{\pi}k. 
\end{equation}

\section{Derivation of a dynamical system for traveling waves}
\label{sec:dynamicalSystem}

Let us assume a rigid domain wall configuration propagating with a constant velocity $\vel$.
We substitute the traveling wave ansatz, $\magn = \magn(x-\vel t)$, in Eq.~\eqref{eq:LL} and this reduces to
\begin{equation}  \label{eq:LLtraveling}
\vel\magn' = \magn\times\heff.
\end{equation}
Our aim is to prove the existence of domain wall solutions for Eq.~\eqref{eq:LLtraveling} and to understand in detail their profiles.
We begin by writing explicitly Eq.~\eqref{eq:LLtraveling}
\begin{equation}  \label{eq:LLtravelingExplicit}
 \vel\magn'=\magn\times \magn''+k^2\magn\times m_3\ez - 2\dm\magn\times(\ex\times\magn')
\end{equation}
where we have the traveling wave $\magn=\magn(\xi),\; \xi = x-\vel t$ and the prime denotes differentiation with respect to $\xi$.

We define the orthonormal system of the three unit vectors
\begin{equation}  \label{eq:coordinateSystem}
\magn,\quad\frac{1}{\vabs}\magn', \quad \magn_\perp=\frac{1}{\vabs}\magn\times \magn',
\end{equation}
where we have defined $\vabs=|\magn'|$.
Both sides of Eq.~\eqref{eq:LLtravelingExplicit} are orthogonal to the magnetization vector $\magn$, hence, they lie on the tangent plane of the unit sphere at point $\magn$.
Thus, Eq.~\eqref{eq:LLtravelingExplicit} produces at most two independent scalar equations. We obtain these equations  by projecting Eq.~\eqref{eq:LLtravelingExplicit} on the two unit tangent vectors $\magn'/\vabs$ and $\magn_\perp$,

{\it Projection of Eq.~\eqref{eq:LLtravelingExplicit} on the vector $\magn_\perp$.}
We obtain
\begin{equation}  \label{eq:projection_mperp}
0 = \magn_\perp\cdot (\magn\times\magn'') + k^2 \magn_\perp\cdot(\magn\times\ez) \cos\theta.
\end{equation}
The DM term does not contribute as this is proportional to $\magn'$.
For the first term we have
\begin{equation}
\magn_\perp\cdot (\magn\times\magn'') 
= \magn_\perp\cdot (\magn\times\magn')'
= \frac{1}{2\vabs}\left(|\vabs\magn_\perp|^2|\right)' = \vabs'.
\end{equation}
For the second term we have
\begin{equation}
\magn_\perp\cdot(\magn\times\ez) = \frac{m_3'}{\vabs}.
\end{equation}
Inserting these results in Eq.~\eqref{eq:projection_mperp}, we obtain,
\begin{equation*}
\vabs' + k^2 \frac{m_3 m_3'}{\vabs} = 0.
\end{equation*}
which integrates to
\begin{equation}  \label{eq:conservationLaw0}
 \vabs^2+k^2 m_3^2 = k^2,
\end{equation}
where the choice of the integration constant arises from the requirement that the north and south poles $\magn=(0,0,\pm1)$ constitute constant equilibrium solutions.

We will make use in the following of the spherical parametrization for the magnetization vector
\begin{equation}
m_1 = \sin\theta \cos\varphi,\quad
m_2 = \sin\theta \sin\varphi,\quad
m_3 = \cos\theta.
\end{equation}
Eq.~\eqref{eq:conservationLaw0} reads
\begin{equation}  \label{eq:conservationLaw}
\vabs=k\sin\theta.
\end{equation}
Furthermore, we may write
\begin{equation}\label{eq:m1_and_m2_expressions}
 m_1=\frac{v}{k}\cos\varphi, \quad m_2=\frac{v}{k}\sin\varphi, \quad
m_1^2+m_2^2=\frac{v^2}{k^2}.
\end{equation}
As a result of the last relation, the equality $v=0$ occurs only at the poles. 
The solutions constructed in this paper assume the poles as values only in the limit $\xi\to\pm\infty$.
Thus, there is no conflict with Eq.~\eqref{eq:coordinateSystem} where $v$ appears in the denominator.

{\it Projection of Eq.~\eqref{eq:LLtravelingExplicit} on $\magn'/\vabs$.}
We obtain 
\begin{equation}  \label{eq:projection_mprime2}
 \vel \vabs^2=\magn\cdot(\magn''\times \magn')+k^2 m_3\ez\cdot(\magn'\times\magn)+ 2\dm\vabs^2 m_1. 
\end{equation}

It is advantageous at this point to define the unit tangent vectors $\ev$ and $\eth$  in the direction of increasing $\varphi$ and $\theta$ respectively,
\begin{equation}
 \ev=(-\sin\varphi, \cos\varphi, 0), \quad
\eth=(\cos\theta\cos\varphi, \cos\theta\sin\varphi, -\sin\theta),
\end{equation}
and to let 
\begin{equation}  \label{eq:linear_velocities}
  \magn'= w\eth + u\ev.
\end{equation}
The ``linear velocity'' components $w$ and $u$ are, respectively, along a parallel with radius equal to $\sin\theta$ (counter-clockwise) and a meridian (from north to south).
They  are given by 
\begin{equation}  \label{eq:wu}
 w=\theta',\quad u=\varphi'\sin\theta.
\end{equation}
Note the relation $|\magn'|=\sqrt{w^2+u^2}=v$.

The second term on the right of Eq.~\eqref{eq:projection_mprime2} is calculated using the expressions of $m_1$, $m_2$ from Eqs.~\eqref{eq:m1_and_m2_expressions}, as well as Eqs.~\eqref{eq:conservationLaw} and \eqref{eq:wu},
\begin{equation}  \label{eq:projection_mprime3}
k^2 m_3\ez\cdot(\magn'\times\magn) = -\vabs^2\varphi' \cos\theta = -ku\vabs\cos\theta.
\end{equation}
In order to simplify the first term on the right of \eqref{eq:projection_mprime2} we differentiate both sides of Eq.~\eqref{eq:linear_velocities} and obtain
\begin{align}
 \magn'' & = w\eth'+u\ev' + w'\eth+u'\ev \notag \\
  & = w( \thth \theta'+\thv \varphi') + u(\cancel{\vth\theta'}+\vv\varphi') + w'\eth+u'\ev,  
\end{align}
where the variable after the comma indicates the partial derivative with respect to that variable.
The term $\vth$ equals zero and the term $\thth$ is parallel to $\magn$ and does not contribute to the triple product $\magn\cdot(\magn''\times\magn')$.
The remaining vector derivatives are given by $\thv=(\cos\theta)\ev, \; \vv = -(\cos\varphi, \sin\varphi, 0).$
The contributing terms of $\magn''$ to be inserted in Eq.~\eqref{eq:projection_mprime2} are  
\begin{align}
  & (w\varphi' \cos\theta+u')\ev+u\varphi'\vv + w'\eth \notag  \\
  & \qquad\qquad = \left(\frac{kuw\cos\theta}{v} + u'\right)\ev + \frac{ku^2}{\vabs}\vv + w'\eth
\end{align}
where we have inserted $\varphi'=ku/\vabs$ from Eqs.~\eqref{eq:conservationLaw}, \eqref{eq:wu}.
In the cross product $\magn''\times\magn'$, the term $\vv\times\eth$ is orthogonal to $\magn$, thus, has vanishing projection on $\magn$.
We are left with the two combinations $\vv\times\ev=-\ez,\; \eth\times\ev=\magn$, and we obtain
\begin{equation}  \label{eq:projection_mprime4}
\magn\cdot(\magn''\times\magn') = w'u-u'w-ku \vabs\cos\theta.
\end{equation}
Inserting the results of Eqs.~\eqref{eq:projection_mprime3} and \eqref{eq:projection_mprime4} into
Eq.~\eqref{eq:projection_mprime2}, we finally obtain
\begin{equation}  \label{eq:projection_mprime5}
 \vel\vabs^2=w'u-u'w-2ku\vabs\cos\theta + 2\dm \vabs^2 \sin\theta \cos\varphi.
\end{equation}
Let us introduce a new variable $\rho$,
\begin{equation}  \label{eq:rho}
u = \vabs\sin\rho,\quad w = -\vabs\cos\rho
\end{equation}
such that $\rho=0$ corresponds to motion from the south to the north pole.
We insert Eqs.~\eqref{eq:rho} in Eq.~\eqref{eq:projection_mprime5} to obtain
\begin{equation}\label{eq:projection_mprime6}
 \vel \vabs^2=\vabs^2\rho'-2k\vabs^2\sin\rho\cos\theta + 2\dm\vabs^2 \sin\theta \cos\varphi.
\end{equation}
Two equations connecting $\rho$ with $\theta$ and $\varphi$ are derived using Eqs.~\eqref{eq:wu}, \eqref{eq:rho} and \eqref{eq:conservationLaw},
\begin{equation}  \label{eq:thetaprime}
 \theta' = - k\sin\theta\cos\rho,\quad \varphi' = k\sin\rho.
\end{equation}
Eqs.~\eqref{eq:projection_mprime6} and \eqref{eq:thetaprime} have the equilibrium solutions $\theta=0, \pi$ that correspond to $\vabs=0$ via Eq.~\eqref{eq:conservationLaw}.
We obtain the autonomous system of ordinary differential equations (ODEs) 
\begin{equation}  \label{eq:rho_theta_phi_system}
\begin{split} 
 \rho' & = 2k\cos\theta\sin\rho - 2\dm \sin\theta \cos\varphi + \vel,  \\
  \theta' & = -k\sin\theta\cos\rho, \\
  \varphi' & = k \sin\rho.
\end{split}
\end{equation}

\section{Existence of traveling domain walls}
\label{sec:existence}

\subsection{The system of equations for traveling chiral domain walls}

The presence of the DM interaction ($\dm\neq 0$) in Eqs.~\eqref{eq:rho_theta_phi_system} allows us to prove the existence of traveling domain wall solutions, that connect the two  equilibrium points  $\theta=0$ and $\theta=\pi$, representing constant magnetization at the north and the south pole of the sphere respectively.  The poles are reached only in the limit  of the variable $|\xi|$ tending  to infinity, the fact that the variable $\varphi$ is not defined at the poles is not a hindrance. 

We make a change of variable that transforms the sphere to a cylinder, allowing for simpler calculation.
We define a new variable $z$ by $\tan \left(\textstyle{\frac{1}{2}}\theta\right)=e^{-z}$. 
Making use of the formulae 
\begin{equation}
 \cos\theta=\tanh z, \quad
\sin\theta=\sech z, \quad  d\theta=-(\sin\theta)dz,
\end{equation}
the system of Eqs.~\eqref{eq:rho_theta_phi_system} becomes
\begin{subequations}  \label{eq:rho_z_phi_system}
\begin{align} 
\rho' & = 2k\tanh z\sin\rho -2\lambda \sech z \cos\varphi+\vel \label{eq:rho_z_phi_system1} \\
z' & = k\cos\rho \label{eq:rho_z_phi_system2} \\
\varphi' & = k\sin\rho. \label{eq:rho_z_phi_system3}
\end{align}
\end{subequations}
The gradient of the right side of these equations with respect to the  three dependent variables remains globally bounded.
Therefore the ODE system exhibits global existence and uniqueness of solutions as well as continuous dependence of the solution on their initial data and on the system parameters.
We have the obvious relation 
\begin{equation}  \label{eq:circle_equation}
 z'^2+\varphi'^2=k^2. 
\end{equation}

The circular cylinder (of radius unity) formed in the new coordinates when the points $(z,\varphi)$ and $(z,\varphi+2\pi)$ are identified, is topologically equivalent to the  original sphere $(\theta,\varphi)$, punctured at its north and the south poles. The north pole of the sphere corresponds to $z=+\infty$, the south pole  to $z=-\infty$.
The meridians of the sphere map to the lines $\varphi=$\,constant, that are parallel to the axis of the cylinder (where the minimal principal curvature is equal to zero).
The parallels of the sphere map  to the intersection circles of the cylinder cut by  planes that are perpendicular to its axis. 

In the absence of the DM term ($\dm=0$), the system \eqref{eq:rho_z_phi_system} yields a standing domain-wall ($\vel=0$ and, thus, $\xi=x$).
The solution is $\sin\rho=0$, $\varphi=\mbox{constant}$ and $z=\pm kx$.
Taking the hyperbolic tangent of both sides in the latter relation and recalling that $m_3=\cos\theta=\tanh z$, obtains the standing domain walls respectively for $\rho=0$ and $\rho=\pi$,
\begin{equation}  \label{eq:DWstandard}
m_3=\pm\tanh(kx)
\end{equation}
The plus sign corresponds to $\rho=0$ and the minus sign corresponds to $\rho=\pi$.
The domain-wall \eqref{eq:DWstandard} survives even in the presence of the DM term for $\varphi(x)=\pm\frac{\pi}{2}$, which eliminates the DM term in Eqs.~\eqref{eq:rho_z_phi_system}.

\subsection{Symmetry and choices}
We require the functions $z(\xi)$ and $\rho(\xi)$  to be  odd  and the function $\varphi(\xi)$ to be even.
This is consistent with the fact that when the sign of $\xi$ is reversed the substitution 
\begin{equation}  \label{eq:symmetries}
 (\rho,z,\varphi)\mapsto(-\rho,-z,\varphi),
\end{equation}
leaves the system of Eqs~\eqref{eq:rho_z_phi_system} invariant. 
Effectively, this  restricts the analysis of the system to the domain $\xi\ge0$ adopting  the initial data 
\begin{equation}  \label{eq:initial_data}
  \rho(0)=0, \quad z(0)=0, \quad \varphi(0)=\varphi_0.
\end{equation}
The condition that $\rho=0$ at the wall center ($z=0$) means that the domain wall goes from the south to the north pole, \ie, $m_3\to \mp \infty$ as $\xi\to\pm\infty$.
More general domain walls can be easily obtained as explained in Sec.~\ref{subsec:moreWalls}.
We narrow our search  to {\it strictly monotone} traveling domain walls.
As a domain wall connects the two poles $z=\pm\infty$ and as Eq.~\eqref{eq:rho_z_phi_system2} implies $z'(0)=k>0$
(we have tacitly assumed that $k>0$),
monotonicity means that $z(\xi)$ is strictly increasing and $z\to+\infty$, in the limit $\xi\to+\infty$. 

Adopting the further conditions 
\begin{equation}  \label{eq:dm_phi0}
  \dm>0, \qquad  -\frac{\pi}{2}\le \varphi_0\le \frac{\pi}{2}, 
\end{equation}
guarantees that, for the velocity $\vel>0$, the DM term counteracts the velocity term in Eq.~\eqref{eq:rho_z_phi_system1} as $\xi$ increases from its zero value.
From the same equation it is clear that there must hold $\vel < 2\dm$ in order to have a monotone domain wall.
This velocity bound has been also derived in Eq.~\eqref{eq:speedBound1}.

\begin{theorem}  \label{thm:asymptotics}
A strictly monotone domain wall exhibiting symmetries \eqref{eq:symmetries} and taking the values \eqref{eq:initial_data} at its center, satisfies the relations
\begin{subequations}\label{rho_asymptotic}
\begin{align}  \label{eq:rho_asymptotic1}
&\begin{cases}\lim_{z\to+\infty}\sin\rho=-\frac{\vel}{2k} \\
  k\sin\rho=-\frac{\vel}{2}- \frac{\vel}{2} e^{-2z}+2\dm b(z,\vel,\dm)e^{-z}, \quad |b|<2,
 \end{cases}\\
 \label{eq:rho_asymptotic2}
&\lim_{\xi\to+\infty}z'(\xi)=\sqrt{k^2-(\textstyle{\frac{\vel}{2}})^2},\\
\label{eq:rho_asymptotic3}
&\lim_{\xi\to+\infty}\varphi'(\xi)=-\frac{\vel}{2}.
\end{align}
\end{subequations}
In all limits, the convergence is exponential.
The magnetization vector converges to the north pole as $\xi$ increases ($z'>0$) and the domain wall travels to the right.
\end{theorem}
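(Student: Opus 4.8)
The plan is to exploit strict monotonicity to pass from $\xi$ to $z$ as independent variable, which collapses the system to a single \emph{scalar linear} equation that can be integrated explicitly. Since a strictly monotone wall has $z'=k\cos\rho>0$ on $[0,\infty)$ by \eqref{eq:rho_z_phi_system2}, the correspondence $\xi\mapsto z$ is a strictly increasing bijection of $[0,\infty)$ onto $[0,\infty)$, and I may treat $g:=\sin\rho$ as a function of $z$. Dividing \eqref{eq:rho_z_phi_system1} by \eqref{eq:rho_z_phi_system2} gives $dg/dz=\rho'/k$, that is
\begin{equation*}
\frac{dg}{dz}-2\tanh z\,g=\frac{\vel}{k}-\frac{2\dm}{k}\sech z\cos\varphi .
\end{equation*}
The decisive point is that this is linear in $g$: the coupling to $\varphi$ sits only in the forcing term, and that term is uniformly controlled since $|\cos\varphi|\le 1$.

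I would then integrate using the integrating factor $\sech^2 z$ (because $\tfrac{d}{dz}\ln\cosh^2 z=2\tanh z$), so the left-hand side becomes $\tfrac{d}{dz}(\sech^2 z\,g)$. As $g=\sin\rho$ is bounded by $1$, the quantity $\sech^2 z\,g\to 0$ as $z\to+\infty$, which legitimizes integrating backward from $+\infty$ and yields
\begin{equation*}
-\sech^2 z\,g(z)=\frac{\vel}{k}\int_z^\infty\sech^2 z'\,dz'-\frac{2\dm}{k}\int_z^\infty\sech^3 z'\cos\varphi\,dz'.
\end{equation*}
Using $\int_z^\infty\sech^2=1-\tanh z$ together with $(1-\tanh z)/\sech^2 z=1/(1+\tanh z)=\tfrac12(1+e^{-2z})$, I rearrange this into
\begin{equation*}
k\sin\rho=-\frac{\vel}{2}-\frac{\vel}{2}e^{-2z}+2\dm\,\cosh^2 z\int_z^\infty\sech^3 z'\cos\varphi\,dz',
\end{equation*}
which is precisely the expansion claimed in \eqref{eq:rho_asymptotic1} upon setting $b\,e^{-z}=\cosh^2 z\int_z^\infty\sech^3 z'\cos\varphi\,dz'$.

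The remaining tasks are the bound on $b$ and the two velocity limits. For $z'\ge z\ge 0$ the function $\sech$ is decreasing, so $\int_z^\infty\sech^3\le\sech z\int_z^\infty\sech^2=\sech z\,(1-\tanh z)$, whence
\begin{equation*}
|b|\le e^z\cosh^2 z\,\sech z\,(1-\tanh z)=e^z(\cosh z-\sinh z)=1<2 ,
\end{equation*}
which actually gives a bound sharper than stated. Letting $z\to+\infty$ in the displayed formula gives $\sin\rho\to-\vel/(2k)$, with exponential rate since both remainders are $O(e^{-z})$; strict monotonicity then forces $\cos\rho>0$, so $\rho$ tends into $(-\pi/2,0)$ and $\cos\rho\to\sqrt{1-\vel^2/(4k^2)}$. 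Substituting into $z'=k\cos\rho$ and $\varphi'=k\sin\rho$ delivers \eqref{eq:rho_asymptotic2} and \eqref{eq:rho_asymptotic3}, exponentially fast in $\xi$ because $z\sim\sqrt{k^2-(\vel/2)^2}\,\xi$ grows linearly. I expect the only genuinely delicate step to be justifying the improper integral from $+\infty$: it rests on the a priori boundedness of $\sin\rho$ and on the persistence of the monotone orbit out to $z=+\infty$, both supplied by the global existence noted after \eqref{eq:rho_z_phi_system} and by the monotonicity hypothesis. Once the equation is recognized as linear in $g$, there is no further obstacle.
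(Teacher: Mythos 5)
Your proposal is correct and follows essentially the same route as the paper: passing to $z$ as independent variable via strict monotonicity, recognizing the linear equation for $\sin\rho$, applying the integrating factor $\sech^2 z$, integrating from $z$ to $+\infty$, and bounding the remainder term to extract the three limits. The only (minor) refinements are that you justify the vanishing boundary term at $+\infty$ explicitly and obtain the slightly sharper bound $|b|\le 1$ in place of the paper's $|b|<2$.
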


\begin{proof} 
 Since $z(\xi)$ is strictly increasing, we combine  Eqs.~\eqref{eq:rho_z_phi_system1} and \eqref{eq:rho_z_phi_system2} to make a change of the independent variable from $\xi$ to $z$, 
\begin{equation}  \label{eq:sine_rho}
k\frac{d}{dz}\sin\rho-2k\tanh z\sin\rho
=\vel-2\dm \sech z\cos\varphi. 
\end{equation}
 Multiplying both sides of the equation by the integrating factor $\sech^2z$ obtains
\begin{equation}  \label{eq:sine_rho2}
\frac{d}{dz}\left(k\sech^2z\sin\rho\right)=\vel\,\sech^2 z-2\dm \sech^3 z\cos\varphi. 
\end{equation}
 Integrating from an arbitrary $z$ to positive infinity, multiplying  both sides  by $\cosh^{2} z$ and doing simple algebra obtains
\begin{equation} \label{eq:sine_rho_equation2}
k\sin\rho=-\frac{\vel}{1+\tanh z}+2\dm\cosh^2z\int_z^{+\infty} \sech^3  \eta\cos\varphi(\eta)d\eta.
\end{equation}
We note that
\[
\left| \cosh^2z\int_z^{+\infty} \sech^3\eta\cos\varphi(\eta)\,d\eta \right|
< \cosh^2z\int_z^{+\infty} \sech^3\eta\, d\eta
< 2\,e^{-z}.
\]
We now obtain Eq.~\eqref{eq:rho_asymptotic1},
where
\[
b(z,\vel,\dm) = e^{-z}\, \cosh^2z\int_z^{+\infty} \sech^3  \eta\cos\varphi(\eta)d\eta.
\]
Inserting the limit of Eq.~\eqref{eq:rho_asymptotic1} in Eqs.~\eqref{eq:rho_z_phi_system2} and \eqref{eq:rho_z_phi_system3} we obtain
Eqs.~\eqref{eq:rho_asymptotic2} and \eqref{eq:rho_asymptotic3} respectively.
\end{proof}

If we integrate Eq.~\eqref{eq:sine_rho2} from zero to $z$, we obtain
\begin{equation}  \label{eq:sine_rho_eq1}
k\sin\rho=\frac{\vel}{2}\sinh 2z-2\dm\cosh^2z\int_0^z \sech^3 \eta\cos\varphi(\eta)d\eta.
\end{equation}
The right side of this equation represents a  balance between two terms that tend  to infinity as $z$ increases.
The balance is delicate; controlled by the angle $\varphi$, it forces the right side to stay between $\pm k$.
The rate of change of $\varphi$, 
\begin{equation}  \label{eq:phi_equation1}
\frac{d\varphi}{dz} = \tan\rho,
\end{equation}
is obtained by combining Eqs.~\eqref{eq:rho_z_phi_system2} and \eqref{eq:rho_z_phi_system3}.
These formulae are valid as long as $z(\xi)$ is monotone. The boundary of  monotonicity is reached at $\sin\rho=\pm1$. 

{\it Remark:} One may need the integral of $\sech^3z$, {\it e.g.} for the purpose of producing bounds. The integral is computable by exact formula
\begin{subequations}
\begin{align}
  \int\sech^3z\,dz=&\frac{1}{2}\tanh z\sech z
 +\frac{1}{2}\arctan(\sinh z)+\mbox{const.},   \\
 \int_z^\infty\sech^3 \eta\, d\eta=&-\frac{1}{2}\tanh z\sech z
 +\left(\frac{\pi}{2}
 -\frac{1}{2}\arctan(\sinh z)\right),  \\
 J=& \int_0^{\infty}\sech^3\eta\, d\eta=\frac{\pi}{4}.
\end{align}
\end{subequations}

\subsection{Existence of domain-wall solutions: A topological approach}

We begin by proving the following technical theorem.

\begin{lemma}  \label{thm:technical}
Let $\loverk=\dm/k$ and let the function $f$ be defined by the formula
\begin{equation}
 f(\loverk) = \loverk(\sinh 2z_*+2S(z_*)\cosh^2z_*),
\end{equation}
where
\begin{equation}
z_* =\ln\left(\textstyle\loverk+\sqrt{1+\left(\loverk\right)^2}\right)
-\ln\left(1-\textstyle\loverk\right), \qquad S(z)=\int_0^z\sech^3(\eta)d(\eta).
\end{equation}
Let the  ordering \eqref{eq:parameter_ordering} of the parameters $k,\, \dm, \,\vel$ be satisfied.
Then, for all $\loverk$ for which $f(\loverk)<1$, the orbit with initial conditions \eqref{eq:initial_data} either never reaches the boundary $\rho=\pm\frac{\pi}{2}$  
or it reaches the boundary  transversely ($\rho'\ne0$). 
\end{lemma}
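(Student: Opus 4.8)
The plan is to prove the contrapositive version of the dichotomy: assuming the orbit reaches a boundary $\rho=\pm\frac{\pi}{2}$ \emph{tangentially}, I will show $f(\loverk)\ge1$, so that $f(\loverk)<1$ forbids tangential contact and leaves only the two stated alternatives. Let $\xi_1$ be the \emph{first} parameter value at which $\rho$ attains $\pm\frac{\pi}{2}$, and suppose the contact is tangential, $\rho'(\xi_1)=0$. Because $\rho(0)=0$ and $\xi_1$ is the first contact, the orbit remains in the monotone strip $\rho\in(-\frac{\pi}{2},\frac{\pi}{2})$ on $[0,\xi_1)$, where $z'=k\cos\rho>0$; hence $z$ is a legitimate independent variable and the integrated identity \eqref{eq:sine_rho_eq1} holds up to $z_1:=z(\xi_1)$ by continuity. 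From the contact point I extract two facts: the algebraic value $k\sin\rho=\pm k$ in \eqref{eq:sine_rho_eq1}, and the vanishing-slope relation obtained by setting $\rho'=0$ in \eqref{eq:rho_z_phi_system1}.

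First I would bound the location $z_1$. Setting $\rho'=0$ with $\sin\rho=\pm1$ in \eqref{eq:rho_z_phi_system1} gives $\pm2k\tanh z_1+\vel=2\dm\,\sech z_1\cos\varphi_1$. Imposing $|\cos\varphi_1|\le1$ and using the ordering \eqref{eq:parameter_ordering} (in particular $\tfrac{\vel}{2k}<\loverk<1$), the case $\rho=+\frac{\pi}{2}$ forces $\sinh z_1<\loverk$, i.e.\ $z_1<\operatorname{arcsinh}\loverk$, while the case $\rho=-\frac{\pi}{2}$ (where $\cos\varphi_1$ may be negative) forces $\tanh(z_1/2)<\loverk$, i.e.\ $z_1<2\operatorname{arctanh}\loverk$. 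A short computation places both below $z_*$: one has $z_*-\operatorname{arcsinh}\loverk=-\ln(1-\loverk)>0$ and $z_*-2\operatorname{arctanh}\loverk=\operatorname{arcsinh}\loverk-\ln(1+\loverk)>0$ for $0<\loverk<1$. Thus any tangential contact obeys $z_1<z_*$.

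Next I would bound $k$ from \eqref{eq:sine_rho_eq1} at $z_1$. Replacing $\cos\varphi$ by its extreme admissible value ($-1$ when $\rho=+\frac{\pi}{2}$, $+1$ when $\rho=-\frac{\pi}{2}$) controls the integral by $S(z_1)$, and $\tfrac{\vel}{2}<\dm$ then yields in either case
\begin{equation*}
k<\dm\bigl(\sinh 2z_1+2\cosh^2 z_1\,S(z_1)\bigr)=k\,g(z_1),\qquad g(z):=\loverk\bigl(\sinh 2z+2\cosh^2 z\,S(z)\bigr),
\end{equation*}
so that $g(z_1)>1$. Since each summand of $g$ is increasing on $(0,\infty)$, $g$ is strictly increasing, and $f(\loverk)=g(z_*)$; combining $z_1<z_*$ with monotonicity gives $1<g(z_1)<g(z_*)=f(\loverk)$. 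This establishes $f(\loverk)>1$ under tangential contact, which is the desired contrapositive.

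The step I expect to be the main obstacle is the uniform contact bound $z_1<z_*$. For $\rho=+\frac{\pi}{2}$ it is easy, since there $\cos\varphi_1>0$; for $\rho=-\frac{\pi}{2}$ it is delicate because $\cos\varphi_1$ may be negative, and the control of $z_1$ then rests entirely on the admissibility bound $|\cos\varphi_1|\le1$ together with the velocity bound $\vel<2\dm$ from \eqref{eq:parameter_ordering}. Verifying the two explicit inequalities that put $\operatorname{arcsinh}\loverk$ and $2\operatorname{arctanh}\loverk$ below $z_*$ is precisely where the closed form of $z_*$ enters; the crude integral estimate and the monotonicity of $g$ in the last step are routine.
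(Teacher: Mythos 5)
Your proof is correct and is essentially the paper's own argument rearranged as a contrapositive: exactly as in the paper's two claims, you use the tangency conditions $\rho'=0$, $\sin\rho=\pm1$ in Eq.~\eqref{eq:rho_z_phi_system1} to localize any tangential contact to $z<z_*$ (the paper does this by solving a quadratic in $e^z$, you by the slightly sharper case-split bounds $\operatorname{arcsinh}\loverk$ and $2\operatorname{arctanh}\loverk$, both of which you correctly verify lie below $z_*$), and then evaluate the integrated identity \eqref{eq:sine_rho_eq1} at the contact point with $|\cos\varphi|\le 1$ and $\vel<2\dm$ to force $f(\loverk)>1$, contradicting $f(\loverk)<1$. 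Your restriction to the \emph{first} contact, where monotonicity of $z$ validates \eqref{eq:sine_rho_eq1} by continuity, matches the scope in which the paper itself uses that identity and is exactly what the existence theorem requires, so there is no gap.
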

\begin{proof}
We prove the theorem with the aid of two claims

{\it Claim 1. If $z_*\le
z$, then the equalities $\cos\rho=0$ and $\rho'=0$ cannot hold simultaneously.} 

Proof of claim 1. Inserting $\cos\rho=0$ and $\rho'=0$ in the first Eq.~\eqref{eq:rho_z_phi_system}, we obtain 
\begin{equation}
 0=\pm 2k\tanh z - 2\dm \sech z \cos\varphi+\vel,
\end{equation}
which is expressed as a quadratic equation in $e^z$,
\begin{equation}
 \left(\pm k+\frac{\vel}{2}\right) e^{2z} - 2\dm e^{z} \cos\varphi- \left(\pm k-\frac{\vel}{2}\right)=0.
\end{equation}
The roots  are
\begin{equation}
 e^z=\frac{\lambda\cos\varphi\pm\sqrt{\lambda^2\cos^2\varphi+k^2-\left(\frac{\vel}{2}\right)^2}}{\pm k+\frac{\vel}{2}}.
\end{equation}
Considering the constraints \eqref{eq:parameter_ordering} the following estimate applies,
\begin{equation}
e^z\le
 \left(\textstyle\frac{\dm}{k}+\sqrt{1+\left(\frac{\dm}{k}\right)^2}\right)
\left(1-\textstyle\frac{\dm}{k}\right)^{-1}. 
 \end{equation}
Taking the logarithm on both sides, produces  exactly the inequality $z\le z_*$. 
The exponent $z_*>0$ can be made arbitrarily small by having values of $\dm/\anisotropy > 0$ sufficiently small.  

With claim 1 now proved, we still need to exclude the simultaneous holding of $\cos\rho=0$ and $\rho'=0$. Claim 2 does more than this.

{\it Claim 2. $\cos\rho\ne 0$ when $z\le z_*$}. 
 
Proof of Claim 2. When  $z\le z_*$, we obtain easily that $kf(\loverk)$ is absolutely greater than the right hand side of Eq.~\eqref{eq:sine_rho_eq1} and hence than the left side, $k\sin\rho$.
The requirement  $f(\loverk)<1$ of the theorem, implies $|\sin\rho|<1$, hence $\cos\rho\ne0$.
\end{proof}

\begin{theorem}  \label{thm:existence}
[Local existence of traveling domain walls]
Let $\loverk=\dm/k$ satisfy the condition of Lemma~\ref{thm:technical}.
Then, there is a neighborhood $N$ of $\frac{\pi}{2}$, such that for $\varphi_0\in N$ there is a strictly increasing domain wall solution of the system of Eqs.~\eqref{eq:rho_z_phi_system}.
The domain wall velocity is $\vel>0$ if $\varphi_0<\frac{\pi}{2}$ and $\vel<0$ if $\varphi_0>\frac{\pi}{2}$.
\end{theorem}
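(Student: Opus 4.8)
The plan is a topological shooting argument that treats the wall velocity $\vel$ as the shooting parameter while the tilt $\varphi_0$ is held fixed near $\tfrac{\pi}{2}$. I would first reduce to the case $\varphi_0<\tfrac{\pi}{2}$. The map $(\rho,z,\varphi)\mapsto(-\rho,z,\pi-\varphi)$ combined with $\vel\mapsto-\vel$ leaves the system \eqref{eq:rho_z_phi_system} and the initial data \eqref{eq:initial_data} invariant, so it carries a monotone wall at $(\varphi_0,\vel)$ to one at $(\pi-\varphi_0,-\vel)$; this trades the range $\varphi_0>\tfrac{\pi}{2}$ (where $\vel<0$) for $\varphi_0<\tfrac{\pi}{2}$ (where $\vel>0$), while $\varphi_0=\tfrac{\pi}{2}$ is the exact standing wall \eqref{eq:DWstandard} with $\vel=0$. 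Fixing $\varphi_0<\tfrac{\pi}{2}$, hence $\cos\varphi_0>0$, I regard the solution of the initial value problem as depending continuously on $\vel$ in the admissible interval $(0,2\dm)$, which is precisely the range \eqref{eq:parameter_ordering} in which Lemma~\ref{thm:technical} applies.

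The orbit is a strictly monotone wall exactly when $\rho(\xi)$ never leaves $(-\tfrac{\pi}{2},\tfrac{\pi}{2})$, since then $z'=k\cos\rho>0$. I would therefore split the bad velocities into the set $V_+$ (resp.\ $V_-$) of those $\vel$ for which $\rho$ first attains $+\tfrac{\pi}{2}$ (resp.\ $-\tfrac{\pi}{2}$) at a finite $\xi$. These two sets are disjoint by construction, and each is \emph{open}: by Lemma~\ref{thm:technical} every boundary crossing is transversal ($\rho'\neq0$), so the continuous dependence of the flow on $\vel$ (guaranteed by the global bounds noted after \eqref{eq:rho_z_phi_system}) makes neighbouring orbits cross the same boundary at a nearby $\xi$. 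To see that both sets are nonempty I would test the endpoints of the interval using the integrated identity \eqref{eq:sine_rho_eq1}. For $\vel\to0^+$ one has $\rho'(0)=\vel-2\dm\cos\varphi_0<0$, and the $\cos\varphi$-weighted integral stays positive while $\varphi$ decreases through $(-\tfrac{\pi}{2},\tfrac{\pi}{2})$, so the right-hand side is driven below $-k$ and $\rho$ is forced down to $-\tfrac{\pi}{2}$; thus small $\vel\in V_-$. For $\vel$ close to $2\dm$ the term $\tfrac{\vel}{2}\sinh 2z$ dominates the bound $2\dm\cosh^2 z\,S(z)$ once $\vel>\tfrac{\pi}{2}\dm$ (using $\cos\varphi\le1$), so the right-hand side exceeds $+k$ and $\rho$ is pushed up to $+\tfrac{\pi}{2}$; thus $\vel$ near $2\dm$ lies in $V_+$.

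Since $(0,2\dm)$ is connected and $V_\pm$ are disjoint, nonempty and open, the complement contains some velocity $\vel^*$, whose orbit stays inside the strip and is therefore strictly monotone. It remains to show that this threading orbit is a genuine heteroclinic, i.e.\ $z\to+\infty$. I would argue by contradiction: if $z\uparrow z_\infty<\infty$ then $z'=k\cos\rho$ is nonnegative, integrable and uniformly continuous, so $z'\to0$ and $\rho\to\pm\tfrac{\pi}{2}$; moreover $z_\infty\ge z_*$, for otherwise Claim~2 of Lemma~\ref{thm:technical} would keep $\cos\rho$ bounded below and force $z$ to grow linearly. Passing to the limit in the $\rho$-equation, $\varphi'\to\pm k$ makes $\cos\varphi$ oscillate with zero mean, while the residual drift $\pm2k\tanh z_\infty+\vel$ has a definite nonzero sign: positive for the $+\tfrac{\pi}{2}$ limit, and, since $\tanh z_*>\loverk$ yields $2k\tanh z_\infty>2\dm>\vel$, negative for the $-\tfrac{\pi}{2}$ limit. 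Integrating $\rho'$ then pushes $\rho$ across the boundary, contradicting monotonicity. Hence $z\to+\infty$, and Theorem~\ref{thm:asymptotics} upgrades this to exponential convergence to the north pole, so $\vel^*>0$ is the velocity of a strictly monotone wall.

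Finally, the neighbourhood $N$ is the set of $\varphi_0$ for which $\vel^*(\varphi_0)$ remains in $(0,2\dm)$; since $\vel^*\to0$ as $\varphi_0\to\tfrac{\pi}{2}$, this is a genuine neighbourhood of $\tfrac{\pi}{2}$, and the reflection of the first paragraph supplies the walls with $\vel<0$ for $\varphi_0>\tfrac{\pi}{2}$. I expect the main obstacle to be the convergence step: the phase space is non-compact in $z$, so the delicate point is not the boundary behaviour (cleanly handled by the transversality of Lemma~\ref{thm:technical}) but ruling out trapped or non-convergent threading orbits as $z\to+\infty$, which hinges on the decay of the DM term together with the bound $\tanh z_*>\loverk$.
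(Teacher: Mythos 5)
Your overall architecture coincides with the paper's: a shooting argument in the velocity $\vel$ at fixed tilt, a partition of velocities according to whether $\rho$ first reaches $+\frac{\pi}{2}$ or $-\frac{\pi}{2}$ (the paper's sets A and B, your $V_\pm$), openness via the transversality of Lemma~\ref{thm:technical} together with continuous dependence, and connectedness of the velocity interval producing a threading orbit. Your explicit symmetry $(\rho,z,\varphi,\vel)\mapsto(-\rho,z,\pi-\varphi,-\vel)$ is exactly the parity transformation the paper invokes for $\varphi_0>\frac{\pi}{2}$, and your last step — ruling out $z\uparrow z_\infty<\infty$ for the threading orbit using $z_\infty>z_*$, the inequality $\tanh z_*>\loverk$ (hence $2k\tanh z_\infty>2\dm>\vel$), and the bounded antiderivative of $\cos\varphi$ when $\varphi'\to\pm k$ — is a genuine addition: the paper simply asserts that an orbit confined to the strip has $z\to+\infty$.

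There is, however, a real gap at the step where the paper works hardest: nonemptiness of $V_-$. Your argument — ``the $\cos\varphi$-weighted integral stays positive while $\varphi$ decreases through $(-\frac{\pi}{2},\frac{\pi}{2})$, so the right-hand side of \eqref{eq:sine_rho_eq1} is driven below $-k$'' — tacitly assumes $\varphi$ stays in $(-\frac{\pi}{2},\frac{\pi}{2})$ until the exponential factor wins. Nothing excludes the scenario in which $\rho$ approaches $-\frac{\pi}{2}$ without touching it, $d\varphi/dz=\tan\rho$ becomes large, $\varphi$ plunges past $-\frac{\pi}{2}$ while $z$ is still moderate, and the now negative integrand cancels the accumulated positive mass of $\int_0^z\sech^3\eta\cos\varphi\,d\eta$, keeping $|\sin\rho|<1$ — this is precisely the ``delicate balance'' by which the actual traveling wave threads the strip, so it cannot be dismissed for free. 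Tellingly, your argument never uses that $\varphi_0$ is near $\frac{\pi}{2}$, so as written it would prove existence for \emph{every} $\varphi_0$ with $\cos\varphi_0>0$, a stronger statement than the theorem claims. The paper's proof of this claim is where all the analysis sits: at $\vel=0$ it sets $q=-\sin\rho$, $\pold=\cos\varphi$, linearizes for $\pold_0\to0$, and uses the Riccati substitution $q=\nu\pold$ in \eqref{eq:nu_psi_system} to show $\nu\to1$ while $\pold$ remains small (so $\varphi$ never leaves the vicinity of $\frac{\pi}{2}$), after which $q$ grows monotonically to $1$ through the $2q\tanh z$ term; the smallness of $\pold_0$ is exactly the origin of the neighborhood $N$. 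Relatedly, your closing definition of $N$ via ``$\vel^*(\varphi_0)\to0$ as $\varphi_0\to\frac{\pi}{2}$'' is circular, since continuity of the threading velocity in $\varphi_0$ is never established. A lesser fixable point: your $V_+$ test for $\vel>\frac{\pi}{2}\dm$ only shows the orbit must exit the strip, not that it exits at $+\frac{\pi}{2}$ rather than $-\frac{\pi}{2}$; at $\vel=2\dm$ the paper's direct sign argument ($\rho'>0$ termwise for $z>0$) settles this cleanly, and a neighborhood of $2\dm$ then lies in $V_+$ by openness.
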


\begin{proof}
We fix the value $\loverk=\dm/k$ so that it satisfies the condition of Lemma~\ref{thm:technical}.
We assume that  the initial value of the angle coordinate $\varphi_0$ is less than $\frac{\pi}{2}$.
The case $\varphi_0>\frac{\pi}{2}$ will be proved by symmetry. 
We then consider the solution trajectories of the system of Eqs.~\eqref{eq:rho_z_phi_system} with initial values \eqref{eq:initial_data} and with velocity $\vel$ in a closed interval $[0, \vel_+]$, where $\vel_+=2\dm$.
We partition this velocity interval into the following three subsets.
\begin{enumerate}
\item Subset A: contains the  values of $\vel$ for which the first occurrence of $\cos\rho=0$ is at $\rho=\frac{\pi}{2}$. 
\item Subset B: contains the values of $\vel$ for which the first occurrence of $\cos\rho=0$ is at $\rho=-\frac{\pi}{2}$ 
\item Subset C: contains the remaining points of the  set $[0,\vel_+]$, that is all the velocities $\vel$ for which $-\frac{\pi}{2}<\rho<\frac{\pi}{2}$ for all $\xi\ge 0$. 
\end{enumerate}
The sets A and B are open in $[0, \vel_+]$, due to the continuous dependence of the orbits of  system~\eqref{eq:rho_z_phi_system} on the velocity $\vel$ and due to the fact that orbits reaching the boundary planes $\sin\rho=\pm\frac{\pi}{2}$ in phase space do so transversely, as shown in Lemma~\ref{thm:technical}.
In the claims below we show that the velocity $\vel=\vel_+$ and, hence, a neighborhood of it  belongs to the set $A$, the velocity $\vel=0$ and a neighborhood of it  belongs to the set $B$.
Thus, the sets A and B are nonempty.

The sets A and B are disjoint and open in the closed interval $[0,\vel_+]$.
Therefore, the complement C of their union is nonempty.
As a result, there is a velocity $\vel$ with a trajectory for which $\sin\rho\in(-1,1)$ for $z>0$ and thus $z(\xi)$ is a monotone function that converges to infinity.
This solution represents a domain wall. 
 
\noindent{\it Claim 1. The set A is nonempty.} We consider the orbit with $c=c_+=2\dm$. As a result, the right side of Eq.~\eqref{eq:rho_z_phi_system1} remains positive, when $z>0$, and $\rho$ converges to positive infinity due to its first term. Thus, the orbit crosses the value $\rho=\frac{\pi}{2}$.

\noindent{\it Claim 2. The set B is nonempty.} We consider the orbit with $\vel=0$.
As the point $(\rho,\,z,\varphi)$ evolves from $(0,\,0,\,\varphi_0)$, we will prove that, for suitable values of $\loverk=\dm/k$ and $\varphi_0$, the variable $\rho$ will reach the value $-\frac{\pi}{2}$ monotonically;
this orbit will then belong to set B, thus proving that it is nonempty.
Our proof considers small positive values of $\frac{\pi}{2}-\varphi_0$.
Nevertheless, it allows one to see how a traveling domain wall arises continuously from the static domain wall, at which $\varphi=\frac{\pi}{2}$.
Numerical experiments in Sec.~\ref{sec:numerics} demonstrate the validity of the claim for a much wider range of parameters.

Making the following simplifying changes of notation and introducing the function $g(z)$,
\begin{equation} \label{def:variables}
q=-\sin\rho, \qquad \pold=\cos\varphi, 
\qquad g(z)=\frac{\sqrt{1-\pold^2}}{\sqrt{1-q^2}}, 
\end{equation}
we rewrite Eqs.~\eqref{eq:sine_rho} and~\eqref{eq:phi_equation1}, respectively, as the system
\begin{subequations}  \label{eq:q_psi_system}
\begin{align}
\label{eq:q_psi_system1}
\dfrac{dq}{dz} =&\, 2q\tanh z +2\loverk \pold\sech z \\ 
 \dfrac{d\pold}{dz} =&\, qg.
\end{align}
\end{subequations}
and we impose initial conditions $q(0)=0$ and $\pold(0)=\pold_0$.
As $\pold_0\to 0$ the solutions of the linearized system 
\begin{equation}  \label{eq:q_psi_system_approx}
\begin{split}
\dfrac{dq}{dz} =&\, 2q\tanh z +2\loverk \pold\sech z \\ 
 \dfrac{d\pold}{dz} =&\, q
\end{split}
\end{equation}
approximate the solutions of the full system \eqref{eq:q_psi_system} uniformly in compact sets $[0,\,z_1]$.
The substitution $q=\nu\pold$ transforms the system~\eqref{eq:q_psi_system_approx} to 
\begin{equation}  \label{eq:nu_psi_system}
\begin{split}
\dot\nu=&-(\nu^2-2\nu\tanh z-2\loverk\sech z)=-(\nu-\nu_+)(\nu-\nu_-), \qquad \nu(0)=0,  \\ 
 \dot\pold=&\nu\pold, \qquad \pold(0)=\pold_0,
\end{split}
\end{equation}
where dots on top indicate derivatives taken with respect to $z$ and 
\begin{equation}
 \nu_\pm=\textstyle{\frac{1}{2}}(\tanh z\pm\sqrt{\tanh^2 z+2\loverk\sech^2 z}). 
\end{equation}
We make the following observations: (i) $\dot\nu$ is negative above the graph of $\nu_+(z)$ and below the graph of $\nu_-(z)$; it is positive between the two graphs, (ii) $\nu(0)$ lies between the two graphs, and (iii) $\nu_+(z)\to1$ monotonically as $z$ increases. As a result,  $\nu(z)\to 1$ as $z$ increases.

Returning to the original variables, we have 
\begin{equation}
\pold(z)=\pold_0e^{\int_0^z\nu(\eta)d\eta},  \qquad q(z)=\nu(z)\pold(z)=\nu(z)\pold_0e^{\int_0^z\nu(\eta)d\eta}. 
\end{equation}
For the linear approximation to be valid, we need $\pold_0\ll e^{-\int_0^z\nu(\eta)d\eta}$.
In this way a large $z$ is reached, while $\pold$ and $q$ are still small and nearly equal to each other ($\nu$ approaches $1$).
In Eq.~\eqref{eq:q_psi_system1}, the second term on the right becomes negligible, as $\sech z\to 0$, and the variable $q$ increases to $1$ monotonically due to the first term. This proves the claim. 

In order to prove the theorem, all we have to do is to select a $\pold_0$ sufficiently small. 
\end{proof}

\section{More traveling domain walls}

\subsection{Parity related domain walls} 
\label{subsec:moreWalls}

The domain walls whose existence has been proved in Theorem~\ref{thm:existence} go from the south to the north pole, \ie, $m_3\to \mp \infty$ as $\xi\to\pm\infty$, and the velocity is $\vel > 0$, \ie, they travel to the right.
For any such domain wall solution we can obtain further traveling domain walls with the following simple transformations.
These are easy to apply to Eqs.~\eqref{eq:LLtravelingExplicit} and corresponding transformation can be applied to Eqs.~\eqref{eq:rho_z_phi_system}.
\begin{enumerate}
\item
{\it South to north pole, positive velocity.}
We have obtained in Theorem~\ref{thm:existence} a domain wall $(m_1,m_2,m_3)$ that goes from the south to the north pole and has positive velocity $\vel>0$, \ie, it is traveling to the right.
\item
{\it South to north pole, negative velocity.}
Apply $\vel\to -\vel$ and $(m_1,m_2,m_3) \to (-m_1,m_2,m_3)$.
In terms of the variables used in Eqs.~\eqref{eq:rho_z_phi_system} we obtain the same transformation by $\varphi\to\pi-\varphi,\,\rho\to - \rho$.
\item
{\it North to south pole, positive velocity.}
Apply $(m_1,m_2,m_3) \to (m_1,-m_2,-m_3)$.
The same transformation is obtained by $z\to -z,\, \varphi\to -\varphi,\,\rho\to \pi+\rho$.
\item
{\it North to south pole, negative velocity.}
Apply $\vel\to -\vel$ and $(m_1,m_2,m_3) \to (-m_1,-m_2,-m_3)$.
The same transformation is obtained by $z\to -z,\, \varphi\to \pi+\varphi,\,\rho\to \pi-\rho$.
\end{enumerate}
For the last two cases the value $\rho$ at the wall center is $\rho(z=0)=\pi$, while we have assumed $\rho(z=0)=0$ in Sec.~\ref{sec:existence}.

Our results carry over to the case of the so-called interfacial DMI, in which the bulk DMI term in Eq.~\eqref{eq:heff} is replaced by the term $2\dm\ey\times\p_x\magn$.
Any solution $\magn$ of the original model is a solution of the interfacial DMI model if we rotate the magnetization vector by $-\pi/2$.

\subsection{Higher velocity domain walls}

If we obtain a traveling domain wall solution for specific parameter values $\anisotropy, \dm$, we can obtain further solutions by a straightforward scaling.
Let us assume $\magn_0(\xi)$ a traveling domain wall solution with velocity $\vel$ for specific values of the parameters $\anisotropy, \dm$ such that $\anisotropy/\dm>\pi/2$, so that \eqref{eq:parameter_ordering} is satisfied.
Then the configuration $\magn_\mu(\xi) = \magn_0(\xi/\mu)$, where $\mu$ is a constant, is a solution of Eq.~\eqref{eq:LLtravelingExplicit} under the transformation
\begin{equation}
\vel \to \mu\vel,\quad \anisotropy \to \mu \anisotropy,\quad \dm \to \mu\dm.
\end{equation}
This means that the maximum attainable domain wall velocity scales proportional to the anisotropy constant $\anisotropy$ if we keep the ratio $\anisotropy/\dm$ constant.
There is no theoretical limit to the velocity, and this is set only by the availability of materials with high anisotropy and DMI parameters.
Furthermore, the width of the domain wall scales inversely proportional to $\anisotropy$, which means that faster and narrower walls are obtained for increasing $\anisotropy$.
The above dynamical behavior is very different than that of the Walker domain walls, which become slower as $\anisotropy$ increases and a theoretical limit exists for $\anisotropy\to 0$ (see \ref{sec:WalkerWall}).

\section{Numerical calculation of propagating domain walls}
\label{sec:numerics}

We solve Eq.~\eqref{eq:LLtraveling} numerically by applying the relaxation algorithm \cite{KomineasPapanicolaou_PRB2015a}
\begin{equation}  \label{eq:relaxationAlgorithm}
\dot{\magn} = - \magn\times (\magn\times \heff - \vel \p_\xi\magn),\quad
\vel = u - \linmom.
\end{equation}
The velocity $\vel$ is determined self-consistently in terms of an arbitrary input parameter $u$ and the linear momentum for the one-dimensional system, defined as \cite{Haldane_PRL1986}
\begin{equation}  \label{eq:linearMomentum}
\linmom = \int \frac{m_1}{1-m_1^2} (m_2 m_3' - m_3 m_2')\, dx.
\end{equation}
If we note that $\magn\times\delta\linmom/\delta\magn = -\p_x\magn$ it becomes evident that Eq.~\eqref{eq:relaxationAlgorithm} is a minimization algorithm for the functional $\Energy+\frac{1}{2}(u-\linmom)^2$.
When the algorithm converges to a minimum of the functional, where $\dot{\magn}=0$, the magnetization configuration satisfies Eq.~\eqref{eq:LLtraveling} and represents a solitary wave with velocity $\vel$.

The form \eqref{eq:linearMomentum} for the definition of the linear momentum is not unique.
Among the possible definitions, we have chosen the one which is well-defined (i.e., the one that contains a non-divergent integrand) and takes the value $\linmom=0$ for the static Bloch wall \eqref{eq:BlochWall}.
One could add a total derivative in the integrand of Eq.~\eqref{eq:linearMomentum} that give a value proportional to $\pi$ upon integration for a domain wall.
The ambiguity in the value of the linear momentum does not have any effect on our calculations since Eq.~\eqref{eq:relaxationAlgorithm} is solved for any choice of the linear momentum definition.

For a definite numerical calculation we choose the anisotropy and DM parameters
\begin{equation}  \label{eq:params1}
\anisotropy = 1.0,\qquad
\dm = 0.5
\end{equation}
so that $\anisotropy > \anisotropy_c$ and thus the ground state is the (uniform) ferromagnetic.
We insert the Bloch wall \eqref{eq:BlochWall} (choosing the plus signs) as an initial configuration in the numerical algorithm \eqref{eq:relaxationAlgorithm}.
We vary the input parameter $u$ in a range of values and the algorithm converges to a domain wall profile.
The linear momentum $\linmom$ is calculated from the profile and we obtain a velocity range $0 \leq \vel < \velmax$ with $\velmax \approx 0.78$.  

\begin{figure}[t]
\centering
(a)
\includegraphics[width=0.45\linewidth]{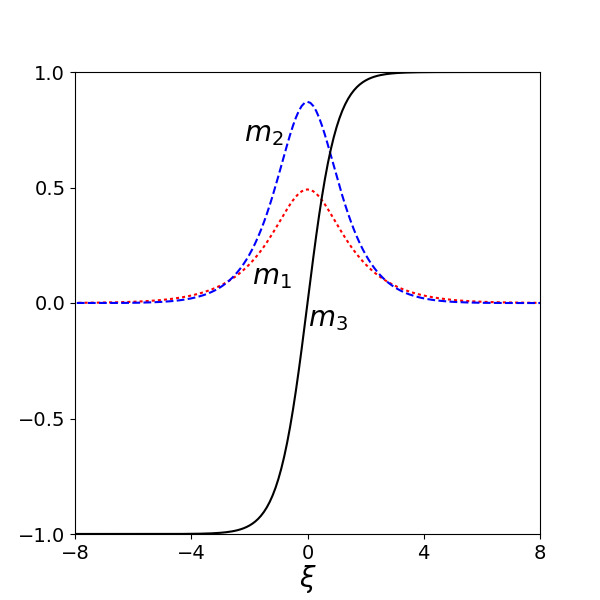}\hspace{15pt}
(b) \includegraphics[width=0.45\linewidth]{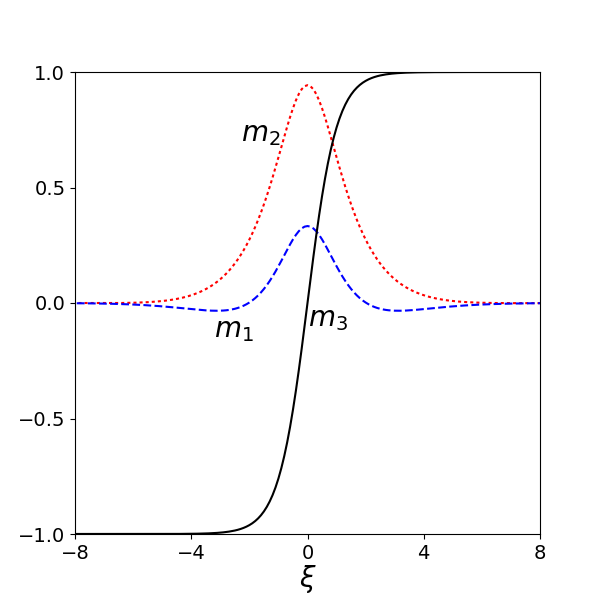}
\caption{The three components of the magnetization for domain walls with velocities (a0 $\vel=0.40$ and (b) $\vel=0.75$, traveling to the right.
These walls have a tilting so that $m_2(0) > 0$ (or $0 < \varphi_0 < \pi/2$), thus they belong to the branch of domain walls around the stable Bloch wall.
Parameter values are given in Eq.~\eqref{eq:params1}.}
\label{fig:DWprofiles}
\end{figure}

For $\vel=0$ we have $m_1=0$ and, as the velocity increases, the component $m_1$ increases, \ie, the magnetization tilts to the direction of motion.
In Fig.~\ref{fig:DWprofiles} we show propagating domain wall profiles for an intermediate velocity $\vel=0.4$ and for a high velocity $\vel=0.75$.
In the limit $\vel \to \velmax$ we have $m_1\to 1$ at the domain wall center.
In the same limit $m_2$ goes to zero at the wall center while it is small, but does not vanish, for $\xi\neq 0$.
The linear momentum \eqref{eq:linearMomentum} takes the value $\linmom=0$ for $\vel=0$ and it is found numerically to be $\linmom \approx 3.14$ when $\vel$ is close to $\velmax$.
The limit $\vel\to\velmax$ is not easy to approach numerically as the denominator in the definition \eqref{eq:linearMomentum} of the linear momentum takes very small values.

\begin{figure}[t]
\centering\includegraphics[width=0.45\linewidth]{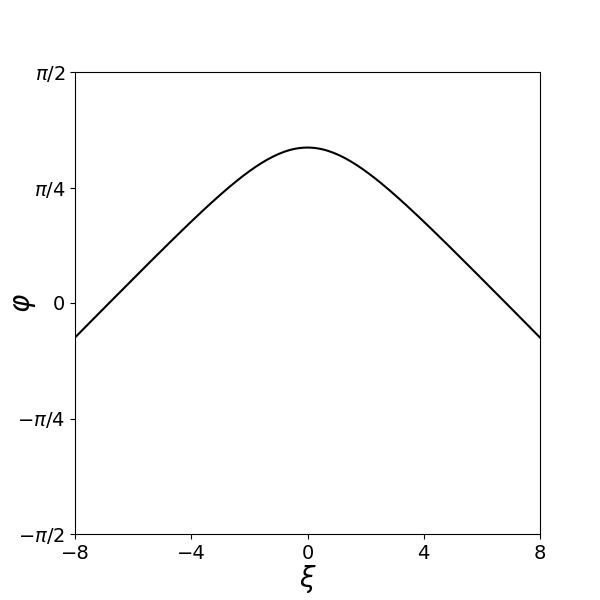}
\caption{The phase angle $\varphi(\xi)$ for the domain wall shown in Fig.~\ref{fig:DWprofiles}a with velocity $\vel=0.4$.
The asymptotic behavior of Theorem~\ref{thm:asymptotics} is followed.}
\label{fig:angle}
\end{figure}

\begin{figure}[t]
\centering
(a)
\includegraphics[width=0.45\linewidth]
{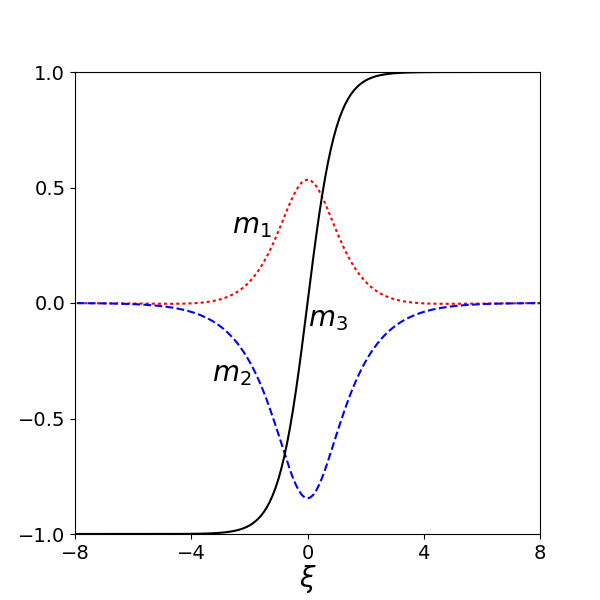}\hspace{15pt}
(b)
\includegraphics[width=0.45\linewidth]{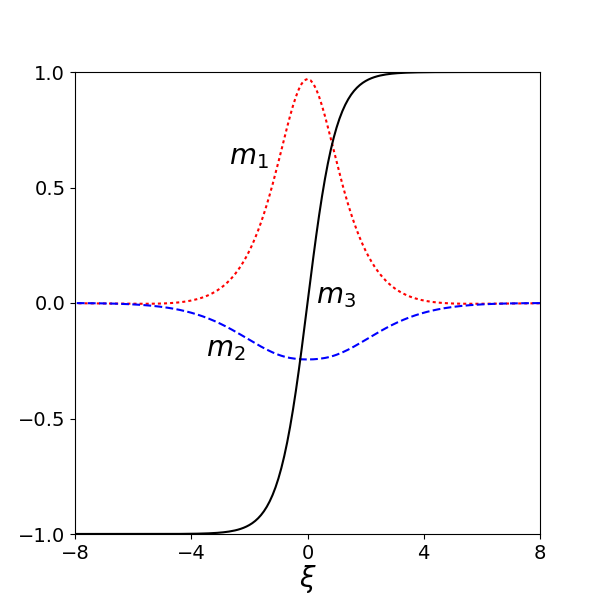}
\caption{The three components of the magnetization for domain walls with velocities (a) $\vel=0.40$ and (b) $\vel=0.75$, traveling to the right.
These walls have a tilting so that $m_2(0) < 0$ (or $-\pi/2 < \varphi_0<0$), thus they belong to the branch of domain walls around the unstable Bloch wall. 
Parameter values are given in Eq.~\eqref{eq:params1}.}
\label{fig:DWprofilesUnstable}
\end{figure}

The complex structure of the domain wall profiles is more apparent for the faster moving walls in Fig.~\ref{fig:DWprofiles}.
We have verified that our numerically calculated domain wall profiles verify the results stated in Theorem~\ref{thm:asymptotics}.
Fig.~\ref{fig:angle} shows the angle $\varphi(\xi)$ for the domain wall shown in Fig.~\ref{fig:DWprofiles} with velocity $\vel=0.4$.
The line follows the asymptotic behavior given in Theorem~\ref{thm:asymptotics}.
This indicates clearly an oscillating behavior of $m_1, m_2$ at the tails of the domain wall,
although this behavior is only slightly visible in the scale of Fig.~\ref{fig:DWprofiles}.
We have found so far a branch of domain walls around the stable Bloch wall.

\begin{figure}[t]
\begin{center}
(a)
\includegraphics[width=0.40\linewidth]{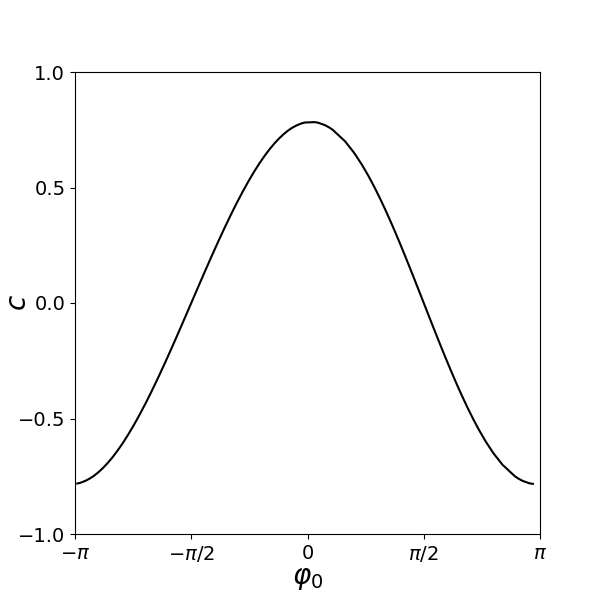}\hspace{20pt}
(b)
\includegraphics[width=0.40\linewidth]{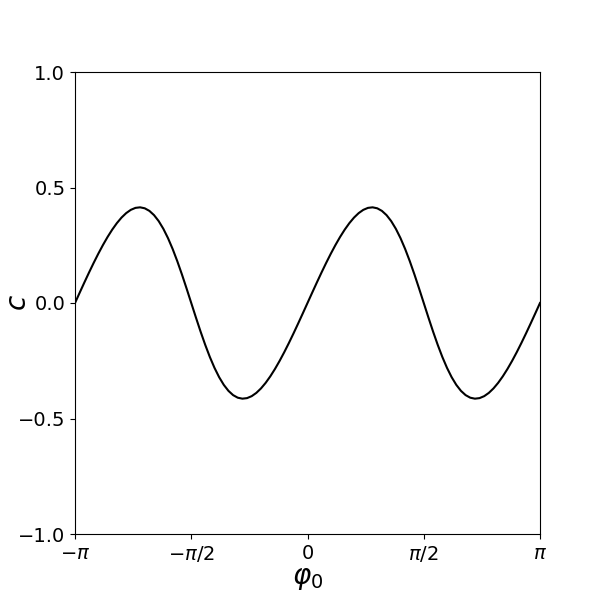}
\end{center}
\caption{(a) The velocity $\vel$ versus tilting angle $\varphi_0$ of the magnetization at the center of the domain wall, for a DM ferromagnet with parameter values as in Eq.~\eqref{eq:params1}.
Note that the domain walls corresponding to $\pm\varphi_0$ are not related by a simple symmetry transformation.
(b)
The velocity $\vel$ for the Walker wall given in Eq.~\eqref{eq:velocity-Walker}, for $\anisotropy=1$, versus tilting angle $\varphi_0$.}
\label{fig:vel_vs_phi}
\end{figure}

We can obtain further traveling domain walls by giving as initial condition to our numerical algorithm the Bloch wall profile \eqref{eq:BlochWall} where we choose the plus sign for $m_3$ and the minus sign for $m_2$.
This is a local energy {\it maximum} due to the contribution of the DMI and it it thus an unstable solution.
The algorithm converges for a velocity in the range $0 < \vel < \velmax$ with $\velmax$,  
the same one found earlier for the domain walls of Fig.~\ref{fig:DWprofiles}.
In Fig.~\ref{fig:DWprofilesUnstable} we show two of the propagating domain wall profiles for velocities $\vel=0.4$ and $\vel=0.75$.
These profiles are similar but not identical to the domain walls shown in Fig.~\ref{fig:DWprofiles}.
In other words, we cannot obtain this new set of domain wall profiles by simple transformations of the  profiles in Fig.~\ref{fig:DWprofiles}
The disparity is due to the chirality of the DMI.
On the other hand, the general features of all domain wall profiles are similar, in particular, they satisfy the asymptotic behavior given in Theorem~\ref{thm:asymptotics}.

For a parametrization of the domain walls we consider the tilting angle $\varphi_0$ of the magnetization at the center of the wall.
In Fig.~\ref{fig:vel_vs_phi}a we plot the velocity for the family of propagating domain walls versus the titling angle $\varphi_0$.
The maximum velocity is obtained for $\varphi_0=0$ (N\`eel wall).
The function $\vel=\vel(\varphi_0)$ is periodic with period $2\pi$. We note that the graph is not symmetric around $\varphi_0=0$ (although this is not apparent in the figure) as walls for $\varphi_0$ and $-\varphi_0$ are not related by a simple transformation.
The above should be contrasted to the Walker velocity in Eq.~\eqref{eq:velocity-Walker} that has a period of $\pi$ and possesses the symmetry $\vel\to-\vel$ for $\varphi_0\to-\varphi_0$.

\section*{Acknowledgement}

SK and CM gratefully acknowledge financial support by the DFG under the grant no. ME 2273/3-1 including a three-month Mercator fellowship during which this work was initiated.
SV gratefully acknowledges financial support by the NSF through contract DMS-1211638.
The authors thank the University of Crete and RWTH Aachen for their hospitality during reciprocal visits.

\appendix

\section{Walker wall}
\label{sec:WalkerWall}

Let us assume the model with symmetric exchange and easy-axis anisotropy where we add the magnetostatic interaction.
The latter is reduced to a relatively simple term, equivalent to modeling a hard axis in the film plane, if we assume that the film thickness is infinite.
The effective field for this well-studied model \cite{SchryerWalker_JAP1974,ODell} is
\begin{equation}  \label{eq:heff_Walker}
 \heff = \p_x^2\magn + \anisotropy^2\, m_3 \ez - m_1 \ex
\end{equation}
where the third term on the right hand side is an anisotropy of the easy-plane type which is modeling the magnetostatic field.
Distances are measured in exchange length units.

The LL equation \eqref{eq:LL} with effective field \eqref{eq:heff_Walker} has propagating domain wall solutions 
\begin{equation}
m_1 = \frac{\cos\varphi_0}{\cosh(\epsilon x)},\qquad 
m_2 = \frac{\sin\varphi_0}{\cosh(\epsilon x)},\qquad 
m_3 = \tanh(\epsilon x)
\end{equation}
under the following conditions
\begin{equation}  \label{eq:velocity-Walker}
\vel = -\frac{\sin(2\varphi_0)}{2\epsilon},\quad \epsilon = \pm \sqrt{\anisotropy^2+\cos^2\varphi_0}.
\end{equation}
These solutions follow a geodesic (meridian) on the magnetization sphere connecting the poles $m_3=\pm 1$.
The angle $\varphi_0$ is typically referred to as the {\it wall tilting}.
The Bloch wall of Eq.~\eqref{eq:BlochWall} is obtained for $\varphi_0=\pm\pi/2$, and the N\`eel wall is obtained for $\varphi_0=0, \pi$.
They are both static $(\vel=0$) solutions, but the Bloch wall is an energy minimum and is typically observed in experiments.

In Fig.~\ref{fig:vel_vs_phi}b we plot the velocity of the wall given in Eq.~\eqref{eq:velocity-Walker} as a function of the titling angle $\varphi_0$. 
The velocity is $\pi$-periodic due to the equivalence of the walls under the reflexion $\varphi_0 \to -\varphi_0$.
There is a maximum, that we shall call $\vel_{\rm max}$, obtained for a value of the wall tilting $\pi/4 < \varphi_0 < \pi/2$.
The tilting angle $\varphi_0$, and the corresponding maximum velocity $\vel_{\rm max}$, depend on the anisotropy $\anisotropy$.
We have $\vel_{\rm max} \to 1$ for $\anisotropy\to 0$, while $\vel_{\rm max} \approx 1/(2\anisotropy)$ for large values of $\anisotropy$.
To obtain the velocity in physical units we multiply the above dimensionless values by the unit of velocity $\vel_0=(\gamma_0\mu_0 M_s)\lex = \gamma_0 \sqrt{2\mu_0 A}$.
Thus, the largest possible value for the wall velocity is $\vel_0$ and it is obtained for small $\anisotropy$.

\section{Non-chiral magnet}
\label{sec:nonchiralMagnet}

The Landau-Lifshitz equation \eqref{eq:LL} where the effective field $\bm{f}$ contains only exchange and anisotropy terms has been shown to have a Lax pair and is completely integrable producing single and multi-soliton solutions as well as periodic solutions (see \cite{BikbaevBobenkoIts_TMP2014} and references therein).
Single soliton solutions without the use of integrability have been derived in \cite{TjonWright_PRB1977}.

Here, we are using the formulation in the main text of the present study and  derive single soliton and domain wall solutions without the use of integrability.
In the absence of DMI ($\dm=0$) the third equation in \eqref{eq:rho_theta_phi_system} decouples and, thus, we only need to solve the reduced system
\begin{equation}  \label{eq:gamma_theta_system}
 \begin{split} 
 \rho' & = 2k\sin\rho\cos\theta + \vel,  \\
  \theta' & = -k\sin\theta\cos\rho.
 \end{split}
\end{equation}
Cross-multiplying the two equations and further multiplying the result by $\sin\theta$ gives a perfect derivative
\begin{equation}
(\sin\rho\sin^2\theta)'-\frac{\vel}{k}(\cos\theta)' = 0.
\end{equation}
We have the integral
\begin{equation}  \label{eq:integral-nonchiral}
\alpha = \sin\rho\sin^2\theta -\frac{\vel}{k}\cos\theta.
\end{equation}

Inserting in Eq.~\eqref{eq:integral-nonchiral} $\cos\theta=m_3$ we obtain
\begin{equation}  \label{eq:cosrho-nonchiral}
(1-m_3^2)\sin\rho=\alpha+\frac{\vel}{k}m_3, \quad
\cos\rho=\pm\sqrt{1-\left(\frac{\alpha+\frac{\vel}{k} m_3}{1-m_3^2}\right)^2}.
\end{equation}
Multiplying both sides of the second equation by $-k\sin^2\theta$ and using the second of Eqs.~\eqref{eq:gamma_theta_system} gives $(\cos\theta)'$ on the left side, which equals $m_3'$. Thus, 
\begin{equation}  \label{eq:dm3dxi}
 \frac{d m_3}{d\xi}=\pm k(1-m_3^2)\sqrt{1-\left(\frac{\alpha+\frac{\vel}{k} m_3}{1-m_3^2}\right)^2}
= \pm \sqrt{-p(m_3)}
\end{equation}
where we have defined 
\begin{equation}  \label{eq:pm3}
  p(m_3) = - k^2 \left[ (1-m_3^2)^2 - \left(\alpha+\frac{\vel}{k} m_3 \right)^2 \right]. 
\end{equation}

For {\it domain walls} we have $\theta=0$ and $\theta=\pi$ at spatial infinity.
If we assume that $\theta\to 0$ at spatial infinity then Eq.~\eqref{eq:integral-nonchiral} gives $\alpha=-\vel/k$, while for $\theta\to \pi$ we have $\alpha=\vel/k$.
An immediate conclusion is that, for a domain wall where $\theta$ takes both values $0$ and $\pi$ at spatial infinity, we necessarily have to set $\vel=0$.

We continue our investigation anticipating {\it soliton solutions} with the asymptotic value $\theta\to 0$.
Inserting $\alpha=-\vel/\anisotropy$ in Eq.~\eqref{eq:pm3} we obtain
\begin{equation}
p(m_3) = -\anisotropy^2 (1-m_3)^2 \left(1+\frac{\vel}{\anisotropy}+m_3 \right)\left(1-\frac{\vel}{\anisotropy}+m_3 \right).
\end{equation}
Soliton solutions with far-field $m_3=1$ are obtained since $p(m_3)$ has a double root at $m_3=1$ and a neighboring single root.
For $\vel \geq 0$ the single root is at $m_3=-1+\vel/\anisotropy$, which allows for a soliton velocity in the range $0 \leq \vel < 2k$.
For $\vel \leq 0$ the single root is at $m_3=-1-\vel/k$, which allows for a soliton velocity in the range $-2k < \vel \leq 0$.
Thus, we have solitons for velocity $|\vel| < 2k$.
Alternatively, inserting $\alpha=\vel/\anisotropy$ in Eq.~\eqref{eq:pm3} we obtain similar results for solitons with far-field $m_3=-1$.

In the general case, the polynomial $p(m_3)$ factors to 
\begin{equation}
p(m_3) = \anisotropy^2 \left( m_3^2-\frac{\vel}{k}m_3-1-\alpha \right)\left( m_3^2+\frac{\vel}{\anisotropy}m_3-1+\alpha \right)
\end{equation}
For values of the constant $-1 < \alpha < 1$ the polynomial $p(m_3)$ has exactly one root in the interval $(0,1)$ and another root in $(-1,0)$.
In order to see this one calculates that $p(m_3=\pm 1) > 0$ and $p(m_3=0) < 0$.
Necessarily there are two roots in $(-1,1)$ and two roots outside it. 
Setting $m_3$ equal to one of the roots of $p(m_3)$ we obtain {\it spin wave solutions}.
These have $m_3$ constant according to Eq.~\eqref{eq:dm3dxi}, $\rho=\pm \pi/2$ from Eq.~\eqref{eq:gamma_theta_system} while for the angle $\varphi$ we have $\varphi'=\anisotropy$ from Eq.~\eqref{eq:rho_z_phi_system3}.
This gives precession of the magnetization vector with $\xi$.
For the first equation in \eqref{eq:gamma_theta_system} we find $m_3=\cos\theta=-\vel/2\anisotropy$.
This gives a bound for the velocity of spin waves $|\vel| <\vel/\anisotropy$.
Finally, there are also solutions in which $m_3$ oscillates between the two roots of $p(m_3)$.



\section*{References}

\end{document}